\long\def\comment#1{}
\newtheorem{theorem}{Theorem}
\newcommand \COMMENTS[1] {{\color{gray}\textit{/* #1 */}}} 
\pgfplotsset{compat=1.18}
\begin{document}

\title{Generalizing Biased Backpressure Routing and Scheduling to  Wireless Multi-hop Networks with Advanced Air-interfaces}

\author{Zhongyuan Zhao}
\orcid{0000-0003-0346-8015}
\affiliation{%
  \institution{Rice University}
  \city{Houston}
  \state{Texas}
  \country{USA}
}
\email{zhongyuan.zhao@rice.edu}

\author{Yujun Ming}
\affiliation{%
  \institution{Rice University}
  \city{Houston}
  \state{Texas}
  \country{USA}
}
\email{yujun.ming@rice.edu}

\author{Ananthram Swami}
\affiliation{%
  \institution{DEVCOM Army Research Laboratory}
  \city{Adelphi}
  \state{Maryland}
  \country{USA}
}
\email{ananthram.swami.civ@army.mil}

\author{Kevin Chan}
\affiliation{%
  \institution{DEVCOM Army Research Laboratory}
  \city{Adelphi}
  \state{Maryland}
  \country{USA}
}
\email{kevin.s.chan.civ@army.mil}

\author{Fikadu Dagefu}
\affiliation{%
  \institution{DEVCOM Army Research Laboratory}
  \city{Adelphi}
  \state{Maryland}
  \country{USA}
}
\email{fikadu.t.dagefu.civ@army.mil}

\author{Santiago Segarra}
\affiliation{%
  \institution{Rice University}
  \city{Houston}
  \state{Texas}
  \country{USA}
}
\email{segarra@rice.edu}

\renewcommand{\shortauthors}{Zhao et al.}

\begin{abstract}
Backpressure (BP) routing and scheduling is a well-established resource allocation method for wireless multi-hop networks, known for its fully distributed operations and proven maximum queue stability. 
Recent advances in shortest path-biased BP routing (SP-BP) mitigate shortcomings such as slow startup and random walk, but exclusive link-level commodity selection still suffers from the last-packet problem and bandwidth underutilization. 
Moreover, classic BP routing implicitly assumes single-input-single-output (SISO) transceivers, which can lead to the same packets being scheduled on multiple outgoing links for multiple-input-multiple-output (MIMO) transceivers, causing detouring and looping in MIMO networks.
In this paper, we revisit the foundational Lyapunov drift theory underlying BP routing and demonstrate that exclusive commodity selection is unnecessary, and instead propose a Max-Utility link-sharing method. 
Additionally, we generalize MaxWeight scheduling to MIMO networks by introducing attributed capacity hypergraphs (ACH), an extension of traditional conflict graphs for SISO networks, and by incorporating backlog reassignment into scheduling iterations to prevent redundant packet routing. 
Numerical evaluations show that our approach substantially mitigates the last-packet problem in state-of-the-art (SOTA) SP-BP under lightweight traffic, and slightly expands the network capacity region for heavier traffic.
\end{abstract}

\begin{CCSXML}
<ccs2012>
   <concept>
       <concept_id>10003033.10003068.10003073.10003074</concept_id>
       <concept_desc>Networks~Network resources allocation</concept_desc>
       <concept_significance>500</concept_significance>
       </concept>
   <concept>
       <concept_id>10003033.10003068.10003069.10003072</concept_id>
       <concept_desc>Networks~Packet scheduling</concept_desc>
       <concept_significance>500</concept_significance>
       </concept>
   <concept>
       <concept_id>10003752.10003809.10010172.10003824</concept_id>
       <concept_desc>Theory of computation~Self-organization</concept_desc>
       <concept_significance>300</concept_significance>
       </concept>
   <concept>
       <concept_id>10003752.10003809.10003635.10010037</concept_id>
       <concept_desc>Theory of computation~Shortest paths</concept_desc>
       <concept_significance>300</concept_significance>
       </concept>
   <concept>
       <concept_id>10003033.10003039.10003056</concept_id>
       <concept_desc>Networks~Cross-layer protocols</concept_desc>
       <concept_significance>500</concept_significance>
       </concept>
   <concept>
       <concept_id>10003752.10003809.10003636.10003808</concept_id>
       <concept_desc>Theory of computation~Scheduling algorithms</concept_desc>
       <concept_significance>500</concept_significance>
       </concept>
   <concept>
       <concept_id>10003752.10003809.10003636.10003814</concept_id>
       <concept_desc>Theory of computation~Stochastic approximation</concept_desc>
       <concept_significance>500</concept_significance>
       </concept>
   <concept>
       <concept_id>10003752.10003809.10003636.10003811</concept_id>
       <concept_desc>Theory of computation~Routing and network design problems</concept_desc>
       <concept_significance>500</concept_significance>
       </concept>
 </ccs2012>
\end{CCSXML}

\ccsdesc[500]{Networks~Network resources allocation}
\ccsdesc[500]{Networks~Packet scheduling}
\ccsdesc[300]{Theory of computation~Self-organization}
\ccsdesc[300]{Theory of computation~Shortest paths}
\ccsdesc[500]{Networks~Cross-layer protocols}
\ccsdesc[500]{Theory of computation~Scheduling algorithms}
\ccsdesc[500]{Theory of computation~Stochastic approximation}
\ccsdesc[500]{Theory of computation~Routing and network design problems}

\keywords{Backpressure routing, queueing networks, Lyapunov drift, MIMO, MaxWeight scheduling, distributed algorithms}

\maketitle

\section{Introduction}\label{sec:intro}

Backpressure (BP) routing \cite{tassiulas1992, neely2005dynamic, georgiadis2006resource} is a well-established resource allocation method for wireless multi-hop networks, applicable to mobile ad-hoc/sensor networks, xG (device-to-device, wireless backhaul, and non-terrestrial coverage), vehicle-to-everything (V2X), Internet-of-Things (IoT), and machine-to-machine (M2M) communications \cite{kott2016internet,akyildiz20206g,chen2021massive,noor20226g}.
BP routing relies on a per-destination queuing system, where packets destined to node $c$ are referred as commodity $c$.
Characterized by exclusively selecting one commodity for each link, and MaxWeight  scheduling \cite{tassiulas1992}, 
BP routing allows packets to explore all possible routes towards their destinations, addressing inter-flow interference via joint routing and scheduling. 
Theoretically, it has proven maximum queue stability within the network capacity region, named as throughput optimality~\cite{neely2005dynamic, georgiadis2006resource, neely2022stochastic}. 
Practically, BP routing supports fully distributed implementation with the help of distributed schedulers \cite{joo2012local,zhao2021icassp,zhao2022twc,zhao2023graphbased}, providing efficiency, scalability, and robustness against single-point-of-failure, which are critical for military and disaster relief communications, and backbone infrastructures for social and economic activities.
In addition, the BP algorithm also applies to other queueing networks like edge/cloud computing \cite{Kamran2022deco,lin2020distributed,zhao2025icassp} and traffic signal control \cite{levin2023maxpressure}.

The original BP scheme has some well-known shortcomings like slow startup, random walk, and the last-packet problem~\cite{neely2005dynamic,georgiadis2006resource,neely2022stochastic,jiao2015virtual,cui2016enhancing,gao2018bias}. 
Therefore, various improvements have been proposed, such as queue-agnostic biases~\cite{neely2005dynamic, georgiadis2006resource, neely2022stochastic, jiao2015virtual, zhao2023icassp, zhao2023enhanced, zhao2024tmlcn}, virtual queues~\cite{ji2012delay, cui2016enhancing, hai2018delay, zhao2024tmlcn}, and route restrictions~\cite{ying2010combining, Rai2017loop, yin2017improving}. 
Shortest path-biased BP routing (SP-BP)~\cite{neely2005dynamic, georgiadis2006resource} inherits the throughput optimality of the original BP, while SOTA SP-BP~\cite{zhao2023icassp, zhao2023enhanced, zhao2024tmlcn} resolves the slow startup and random walk problems, improving latency and throughput with minimal additional overheads. 
In~\cite{zhao2023icassp, zhao2023enhanced, zhao2024tmlcn}, the shortest path bias is found via an all-pairs shortest path algorithm on a connectivity graph, with edge weights configured based on link features like scheduling duty cycle and long-term link rate. 
However, the last-packet problem still persists due to exclusive commodity selection, which can starve short-lived traffic that lacks consistent pressure from new packets~\cite{Alresaini2016bp,ji2012delay,Erfaniantaghvayi2024}.
The existing solutions often employ virtual queues to prioritize older packets~\cite{Alresaini2016bp,ji2012delay,zhao2024tmlcn} or separated queueing systems~\cite{Erfaniantaghvayi2024}, which can shrink the network capacity region~\cite{Alresaini2016bp,ji2012delay,zhao2024tmlcn,Erfaniantaghvayi2024}.

These problems become more consequential with increasingly diverse data traffic and advanced air-interfaces with high bandwidth and multiple-input-multiple-output (MIMO) capabilities.
The exclusive commodity selection can cause underutilization of high-bandwidth air-interfaces, especially in M2M~\cite{chen2021massive} and IoT~\cite{kott2016internet} applications with numerous bursty traffic.
This problem cannot be fully addressed by employing larger packet sizes or shorter time slots, due to inherent limits in channel coding, synchronization, and transmit/receive (T/R) switch in radio frontends.

Furthermore, we observe that classic BP schemes rely on an implicit assumption of single-input-single-output (SISO) networks.
Prior to link scheduling, a packet can be assigned to multiple outgoing links, leaving its path to be determined later by MaxWeight scheduling based on link utility and conflict relationship.
However, with MIMO transceivers, multiple outgoing links can be scheduled simultaneously.
In classic BP, the separated initial assignment and MaxWeight scheduling allows packets to travel over suboptimal routes, causing detouring and looping that render MIMO counterproductive.
These limitations hinder the popularity of BP routing.

In this paper, we seek to improve the basic operations in nearly all BP schemes by revisiting the  Lyapunov drift theory~\cite{neely2005dynamic,georgiadis2006resource,neely2022stochastic} and conflict modeling of MaxWeight scheduling~\cite{joo2012local,zhao2021icassp,zhao2022twc}.
Through careful analysis of the constraints in minimizing Lyapunov drift, we show that  exclusive commodity selection is unnecessary, and propose a link-sharing approach for  maximum utility (MaxU) multi-commodity selection. 
To extend MaxWeight scheduling into MIMO networks, we generalize traditional conflict graphs into attributed capacity hypergraphs (ACH), incorporating rate reassignment into each scheduling iteration to prevent redundant routing, and extend the distributed local greedy scheduler (LGS)~\cite{joo2012local} to ACH with a transceiver-level algorithm, LGS-MIMO, readily implementable as a network protocol.
Integrating these innovations into SOTA SP-BP~\cite{zhao2024tmlcn}, our MaxU SP-BP can significantly mitigate the last-packet problem while slightly expanding the network capacity region.

\noindent
{\bf Contribution:} The contributions of this paper are as follows:\\
1) By analyzing the constraints in minimizing Lyapunov drift, we show that exclusive commodity selection in classic BP is unnecessary and propose a link-sharing approach, MaxU, to address the long-standing last-packet problem and bandwidth underutilization under diverse traffic, keeping throughput optimality intact.\\
2) We extend BP routing into MIMO networks by modeling conflicts in a MIMO network as an ACH, incorporating rate reassignment into scheduling iteration to avoid redundant packet routing, and developing a transceiver-level algorithm for distributed MaxWeight scheduling in MIMO networks.\\
3) We show that, theoretically and empirically, MaxU SP-BP outperforms classic SP-BP in the network capacity region. \\
4) Through simulation, we demonstrate that our MaxU commodity selection and LGS-MIMO can effectively mitigate the last-packet problem and improve resource utilization for SP-BP under mixed streaming and bursty traffic in both SISO and MIMO networks.

\noindent
{\bf Notation:} 
$ |\cdot| $ represents the cardinality of a set.
$ \mathbbm{1}(\cdot) $ is the indicator function.
$ \mathbb{E}(\cdot) $ stands for expectation.
Upright bold lower-case symbol, e.g., $\bbz$, denotes a column vector, and $\bbz_i$ denotes the $i$-th element of vector $\bbz$. 
Upright bold upper-case symbol $\bbZ$ denotes a matrix, and $\bbZ_{i,j}$ its element at row $i$ and column $j$.
$\delta_{\ccalG}(i)$ stands for the set of edges incident on node $i$ on graph $\ccalG$, and sets $ \delta^{-}_{\ccalG}(i),\delta^{+}_{\ccalG}(i) $ stand for the outgoing and incoming edges of $i$ on directed $\ccalG$.
$ \ccalN_{\ccalG}(i) $ and $ \ccalN^1_{\ccalG}(i) $ denote the set of immediate neighbors of node $i$ or a node set (connected by pair-wise edges) on graph (hypergraph) $\ccalG$.

\section{Problem Formulation}
\label{sec:sys}

We model a wireless multi-hop network as a \emph{connectivity graph} $\ccalG^{n}$ and a \emph{conflict graph}~$\ccalG^{c}$. 
The connectivity graph $\ccalG^{n}=(\ccalV, \ccalE)$ is a directed graph, in which a node $i\in\ccalV$ represents a wireless device and $\ccalE$ is a set of directed links, in which $e=(i,j)\in\ccalE$ for $i,j\in\ccalV$ indicates that node $i$ can transmit data to node $j$ directly.
We assume that $\ccalG^{n}$ is strongly connected, i.e., a directed path always exists between any pair of nodes.
Although we consider the more general directed case, we typically have that $(j,i)\in\ccalE$ if $(i,j)\in\ccalE$.

The conflict graph, $\ccalG^c=(\ccalE,\ccalH)$, 
describes the conflict relationship between links (here, we re-use the notation $\ccalE$ referring to its elements as vertices) and is defined as follows: each vertex $e\in\ccalE$ corresponds to a link in $\ccalG^{n}$ and each undirected edge $(e_1, e_2)\in\ccalH$ indicates a conflict between links $e_1, e_2\in\ccalE$ in $\ccalG^{n}$.
Two links could be in conflict due to either 1) \emph{interface conflict}, i.e., two links share a wireless device with only one transceiver; or 2) \emph{wireless interference}, i.e., their incident devices are within a certain distance such that their simultaneous transmission will cause the outage probability to exceed a prescribed level.
Such a conflict graph is for SISO networks and assumed to be known, e.g., via channel monitoring~\cite[Supp]{zhao2022twc}, or more sophisticated approaches~\cite{yang2016learning}. 

We consider a time-slotted medium access control (MAC) in the wireless network.
Vector $\grave{\bbr}(t)\in\reals^{|\ccalE|}$ collects the real-time link rates 
of all wireless links at time step $t$, $1\leq t\leq T$, where $\grave{\bbr}_{ij}(t)$ is the instantaneous link rate of $(i,j)\in\ccalE$.
Vector $\bbr=\mathbbm{E}_{t}[\grave{\bbr}(t)]\in\reals^{|\ccalE|}$ collects the long-term average link rates of all wireless links. 
The unit of link rate is the number of packets per time slot.

Data packets with the same destination $c$ are referred as commodity $c$.
We consider two commodity types: 1) regular communications, e.g., each flow has a prescribed physical destination $c\in\ccalV$, and 2) computation offloading, where the destination of a data packet is a virtual sink $c\in\ccalW$ that only accepts packets of commodity $c$, and the computation is modeled as sending a packet to a virtual sink over a virtual link~\cite{zhao2025icassp}.
Notice that $\ccalV\cap\ccalW=\emptyset$. 
We denote the set of all commodities as $\ccalC=\ccalV\cup \ccalW$.

\textbf{Queueing system:} Each device hosts per-commodity queues in a first-in-first-out (FIFO) fashion.
The length of the queue for commodity $c$ on device $i\in\ccalV$ at the beginning of time slot $t$ is denoted by $Q_i^{(c)}(t)$. 
The evolution of queue length of a commodity $c\in\ccalC$ on a device $i\in\ccalV$ can be expressed as:
\begin{subequations}\label{E:queue}
\begin{align}
Q_{i}^{(c)}(t+1) &= Q_{i}^{(c)}(t) - M_{i-}^{(c)}(t) + M_{i+}^{(c)}(t) + A_{i}^{(c)}(t),\\
M_{i-}^{(c)}(t) &= \sum_{j\in\ccalV} \mu_{ij}^{(c)}(t),\;\; M_{i+}^{(c)}(t) = \sum_{j\in\ccalV} \mu_{ji}^{(c)}(t) \;,
\end{align}
\end{subequations}
where $ M_{i-}^{(c)}(t), M_{i+}^{(c)}(t) $ are the total outgoing and incoming packets of commodity $c$ on device $i$ in time slot $t$, 
$\mu_{ij}^{(c)}(t)$ is the number of packets of commodity $c$ to be transmitted over link $(i,j)$ in time slot $t$, and $\mu_{ij}^{(c)}(t)=0$ if $(i,j)\notin \ccalE$, 
and $A_{i}^{(c)}(t)$ stands for the number of new packets of commodity $c$ generated by users on device $i$ in time slot $t$. 
In SP-BP~\cite{neely2005dynamic, georgiadis2006resource, neely2022stochastic, zhao2023icassp, zhao2023enhanced, zhao2024tmlcn}, biased backlog is defined as
\begin{equation}\label{E:formulation:bb}   
 U_{i}^{(c)}(t) = Q_{i}^{(c)}(t) + B_{i}^{(c)}, \;\forall\; i\in\ccalV,c\in\ccalC\;, 
\end{equation}
where $ 0\!\leq \! B_{i}^{(c)}\!\!<\!\infty $ is a queue-agnostic, non-negative bias, representing the shortest path distance from $i$ to $c$~\cite{neely2005dynamic,zhao2023icassp,zhao2023enhanced,zhao2024tmlcn}.
For example, $\{B_i^{(c)}\}_{i\in\ccalV, c\in\ccalC}$ can be found by SOTA weighted all-pairs shortest path algorithm in near linear time~\cite{bernstein2019distributed}.
The backpressure of commodity $c$ on link $(i,j)$ is defined as 
\begin{equation}\label{E:backpressure}
U_{ij}^{(c)}(t)= U_{i}^{(c)}(t) - U_{j}^{(c)}(t), \;\forall\; (i,j)\in\ccalE, c\in\ccalC\;.    
\end{equation} 

\subsection{Lyapunov Drift Minimization}

The operations of SP-BP are designed to find the rate assignments for all link-commodity pairs (routing and scheduling decisions) at every time step $t$, in order to maintain queue stability within the network capacity region, namely, throughput optimality~\cite{zhao2024tmlcn,neely2005dynamic,georgiadis2006resource,neely2022stochastic}.
This can be achieved by minimizing the upper bound of the Lyapunov drift of the biased backlogs -- equivalent to solving the following optimization for every time slot $t$~\cite[Theorem 1]{zhao2024tmlcn}:
\begin{subequations}\label{E:formulation}
	\begin{align}
		\bbM^{*}(t) &= \argmax_{\bbM(t)} \sum_{i\in\ccalV}\sum_{j\in\ccalV}\sum_{c\in\ccalC} \mu_{ij}^{(c)}\!(t) U_{ij}^{(c)}\!(t)\label{E:formulation:obj}\\
		\text{s.t. } 
		  \bbM(t) & = \left[\mu_{ij}^{(c)}(t) \mid i,j\in\ccalV, c\in\ccalC \right] \;, \label{E:formulation:M}  \\
		 \mu_{ij}^{(c)}\!(t)&\geq 0,\;\forall\; (i,j)\in\ccalE, c\in\ccalC;\;  \mu_{ij}^{(c)}\!(t)=0,\forall\;(i,j)\notin\ccalE, \label{E:formulation:mu} \\
		 \grave{\bbr}_{ij}(t) &\geq  \sum_{c\in\ccalC} \mu_{ij}^{(c)}\!(t), \;\forall\; (i,j)\in\ccalE  \;, \label{E:formulation:link} \\ 
		 Q_{i}^{(c)}(t) & \geq \sum_{j\in\ccalV} \mu_{ij}^{(c)}(t), \;\forall\; i\in\ccalV, c\in\ccalC  \;, \label{E:formulation:comm} \\
      1 & \geq \bbx_{e_1}(t)+\bbx_{e_2}(t),\; \forall\; (e_1,e_2)\in\ccalH\;, \label{E:formulation:conflict}\\
      \bbx_{e}(t)&=\mathbbm{1}\bigg( \sum_{c\in\ccalC}\mu_{ij}^{(c)}(t) \bigg) \;, e=(i,j)\in\ccalE\;. \label{E:formulation:x}
	\end{align} 
\end{subequations}
In \eqref{E:formulation:obj}, the objective function seeks to maximize the rate for those links and commodities with higher backpressure [cf.~\eqref{E:backpressure}], hence the name backpressure routing.
The constraints in \eqref{E:formulation} are explained as follows:
\eqref{E:formulation:M} specifies the decision variables as the \emph{final} rate assignments for all commodity-link pairs, e.g.,~\eqref{E:quota} or~\eqref{E:sch-mimo}.
\eqref{E:formulation:mu} states that the rate assignment of a commodity-link pair should be non-negative, or zero if links do not exist. 
\eqref{E:formulation:link} states that the sum rate of all commodities on a link should not exceed the real-time link rate. 
\eqref{E:formulation:comm} states that the sum rate of a commodity across the outgoing links of a device should not exceed its queue length.
\eqref{E:formulation:conflict} and \eqref{E:formulation:x} prohibit simultaneous transmissions on conflicting links.

\subsection{Operations of Classic SP-BP}\label{sec:sys:sp-bp}
It is difficult to exactly solve the optimization in~\eqref{E:formulation}, therefore existing solutions rely on heuristics. 
The classic SP-BP is found by replacing the constraints in~\eqref{E:formulation:link} and \eqref{E:formulation:comm} with a stricter one, i.e., only one commodity per link is allowed in each time slot:
\begin{equation}\label{E:optcmdy}
    \sum_{c\in\ccalC}\mathbbm{1}\Big(\mu_{ij}^{(c)}(t)>0\Big) \leq 1,\;\forall\; (i,j)\in \ccalE\;.
\end{equation}
This leads to the classic 4-step BP routing and scheduling~\cite{neely2005dynamic, georgiadis2006resource, neely2022stochastic, zhao2023icassp, zhao2023enhanced, zhao2024tmlcn}, which are expressed in a slightly different form for the convenience of proof and comparison.

\noindent \textbf{Step 1}. The optimal commodity $c_{ij}^{*}(t)$ on each {directed} link (${i,j}$) is selected as the one with the maximal backpressure, i.e., 
\begin{equation}\label{E:commodity}
    c_{ij}^{*}(t) =\argmax_{c\in\ccalV}\; U_{ij}^{(c)}(t) \;,
\end{equation}

\noindent \textbf{Step 2}. All of the real-time link rate $\grave{\bbr}_{ij}(t)$ of link $(i,j)$ is initially assigned to its optimal commodity $c_{ij}^{*}(t)$,
\begin{equation}\label{E:gamma:original}
    \gamma_{ij}^{(c)}\!(t) = \begin{cases}
         \min\!\left\{\grave{\bbr}_{ij}(t),Q_{i}^{(c)}\!(t)\right\}, & \text{if } c=c_{ij}^{*}\!(t), U^{(c)}_{ij}\!(t)>0, \\
         0, & \text{otherwise}.
    \end{cases}    
\end{equation}
Then, compute the link utility vector $\bbw(t)\!=\!\big[w_{ij}(t) \mid (i,j)\in\ccalE\big]$, where the utility of link (${i,j}$) is found as
\begin{equation}\label{E:weight}
    w_{ij}(t) =\sum_{c\in\ccalC} \gamma_{ij}^{(c)}(t)\cdot\max\left\{U_{ij}^{(c)}(t),0 \right\}.
\end{equation}

\noindent \textbf{Step 3}. MaxWeight scheduling \cite{tassiulas1992} finds the schedule $\bbx(t)\in\{0,1\}^{|\ccalE|}$ in order to activate a set of \emph{non-conflicting links} achieving the maximum total utility as follows: 
\begin{subequations}\label{E:scheduling}
\begin{align}
    \bbx (t) &= \argmax_{\tilde{\bbx} (t)\in \{0,1\}^{|\ccalE|} } ~ \tilde{\bbx}(t)^\top  \bbw(t) \;,\label{E:scheduling:obj} \\
    \text{s.t., } \tilde{\bbx}_{e_1}(t) &+ \tilde{\bbx}_{e_2}(t)\leq 1,\;\forall\; (e_1,e_2)\in\ccalH\;,\label{E:scheduling:conflict}
\end{align}
\end{subequations}
where the constraint in~\eqref{E:scheduling:conflict} states that conflicting links should not be scheduled together.
MaxWeight scheduling involves solving an NP-hard maximum weighted independent set (MWIS) problem \cite{joo2010complexity} on the conflict graph. 
In practice, \eqref{E:scheduling} is solved approximately by distributed heuristics, such as the LGS~\cite{joo2012local} and its GCN-based enhancements~\cite{zhao2021icassp, zhao2022twc, zhao2023graphbased}.
In our tests, LGS is used.

\noindent \textbf{Step 4}. 
Apply the schedule to the initial assignment to obtain the final assignment for both routing and link scheduling
\begin{equation}\label{E:quota}
    \mu_{ij}^{(c)}\!(t) = {\gamma_{ij}^{(c)}\!(t)}\cdot \bbx_{ij} (t) ,\;\forall (i,j)\in\ccalE, c\in\ccalC.
\end{equation}

\section{Link-Sharing Commodity Selection}\label{sec:solution}

The legacy exclusive commodity selection~\cite{tassiulas1992,neely2005dynamic,georgiadis2006resource} described in ~\eqref{E:optcmdy} can be traced back to an era when traffic was mostly streaming and physical layer bandwidth was limited such that each transmission carried only one packet.
However, as the bandwidth of air-interfaces has grown by orders of magnitude and while packet size and duration of time slot cannot scale accordingly due to limits in channel coding, synchronization, and T/R switch in radio frontends, more data packets must be enclosed in each transmission, rendering the exclusive commodity selection inefficient.
Furthermore, larger network sizes and increasingly diverse traffic types that lead to numerous commodities and a mixture of short-lived and streaming traffic, also exacerbate the consequences of bandwidth wastage and the last-packet problem~\cite{Alresaini2016bp,ji2012delay} in classic BP schemes.

By laying out the actual constraints of routing and scheduling decisions for throughput optimality in~\eqref{E:formulation}, it is evident that the legacy constraint in~\eqref{E:optcmdy} is overly restrictive.
Based on the looser constraints in~\eqref{E:formulation:link} and \eqref{E:formulation:comm}, we develop link-sharing commodity selection to mitigate the drawbacks of the legacy exclusive approach.

\subsection{Maximum Utility (MaxU) Rate Assignment}
\label{sec:maxu}
To maximize the link utility while strictly enforcing \eqref{E:formulation:comm}, we can replace the 4 steps of SP-BP in~\eqref{E:commodity}--\eqref{E:quota} as follows:

\noindent\textbf{Step 1.}
Filter out commodities $c\in\ccalC$ on each link $(i,j)\in\ccalE$:
\begin{equation}\label{E:mu:select}
\alpha_{ij}^{(c)}(t)= \mathbbm{1}\!\left( U_{ij}^{(c)}\!(t) > 0 \right)\cdot\mathbbm{1}\!\left(Q_i^{(c)}\!(t)>0\right)\;.
\end{equation}

\noindent\textbf{Step 2.}
On each directed link $(i,j)$, first sort the remaining commodities decreasingly by backpressure: 
\begin{equation}\label{E:sort}
    \tilde{\bbc} = \underset{c\in\ccalC}{\arg\textup{sort}_{\downarrow}} \left[\alpha_{ij}^{(c)}(t) \cdot U_{ij}^{(c)}(t)\right] \;,  
\end{equation}
where vector $\tilde{\bbc}$ collects all commodities in sorted order, and $\tilde{\bbc}_n$ is its $n$th element.
Next, sequentially allocate the residual link rate ${\acute{\bbr}}_{ij}(t)$ to the remaining commodities in sorted order $n=1,2,\dots,|\ccalC|$:
\begin{subequations}\label{E:allocate}
\begin{align}
    \acute{\bbr}_{ij}(t) &= \max\left\{\grave{\bbr}_{ij}(t) -\!\! \sum_{m=1}^{n-1}\gamma_{ij}^{(\tilde{\bbc}_m)}\!(t), 0\right\} \;,\\
    \gamma_{ij}^{(\tilde{\bbc}_n)}\!(t) &= \min\left\{\alpha_{ij}^{(\tilde{\bbc}_n)}\!(t)\cdot\acute{\bbr}_{ij}(t),\; Q_i^{(\tilde{\bbc}_n)}\!(t)\right\}\;.
\end{align}
\end{subequations}
Then, compute link utilities $\bbw(t)$ with \eqref{E:weight} to finish step 2.

\noindent\textbf{Steps 3-4.}
To find the final schedule $\bbx(t)$ and rate assignment $\bbM(t)$, we could follow the steps in~\eqref{E:scheduling} and~\eqref{E:quota} for SISO networks, 
or the following general link scheduling function detailed in Section~\ref{sec:schedule}: 
\begin{equation}\label{E:sch-mimo}
[\bbM(t),\bbx(t)]=\psi\left(\bbw(t),\grave{\bbr}(t), \bbGamma(t),\bbQ(t),\bbU(t);\ccalG^n,\ccalG^c, K\right)\;, 
\end{equation}
where $\ccalG^n$ and $\ccalG^c$ are respectively the connectivity graph and ACH, $K$ sets the maximum number of iterations to limit scheduling overhead, and matrices $\bbGamma(t),\bbU(t)\in\reals^{|\ccalE|\times|\ccalC|}$ respectively collect $ \gamma_{e}^{(c)}(t), U^{(c)}_{e}(t)$ for all $ e\in\ccalE,c\in\ccalC $, and matrix $\bbQ(t)\in\mathbb{Z}_{\geq 0}^{|\ccalV|\times|\ccalC|}$ collects queue length $ Q^{(c)}_{i}(t) $ for all $ i\in\ccalV,c\in\ccalC $.

\begin{figure}[t]
    \centering
    \includegraphics[width=0.9\linewidth]{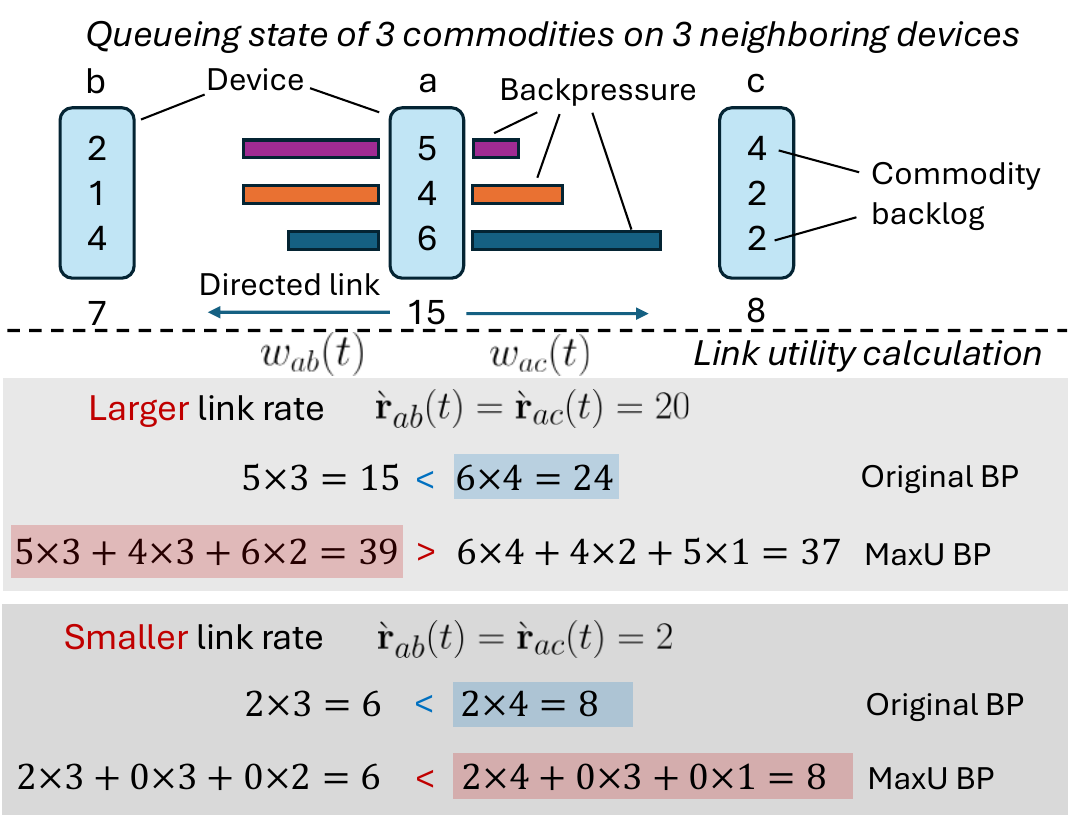}
    \vspace{-0.1in}
    \caption{A mini example of commodity selection and link utility calculation with three devices and three commodities.}
    \label{fig:linksharing}
    \Description{A mini example of commodity selection approaches}
\end{figure}

\begin{figure*}[!t]
	\centering
    \hspace{-3mm}
    \subfloat[]{
    \includegraphics[height=1.65in]{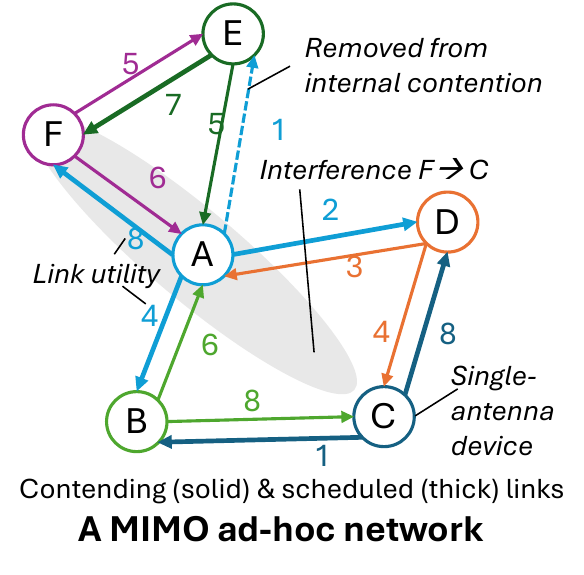}
    \label{fig:graph:network}
	  \vspace{-0.1in}
    }\hspace{-2mm}
    \subfloat[]{
    \includegraphics[height=1.65in]{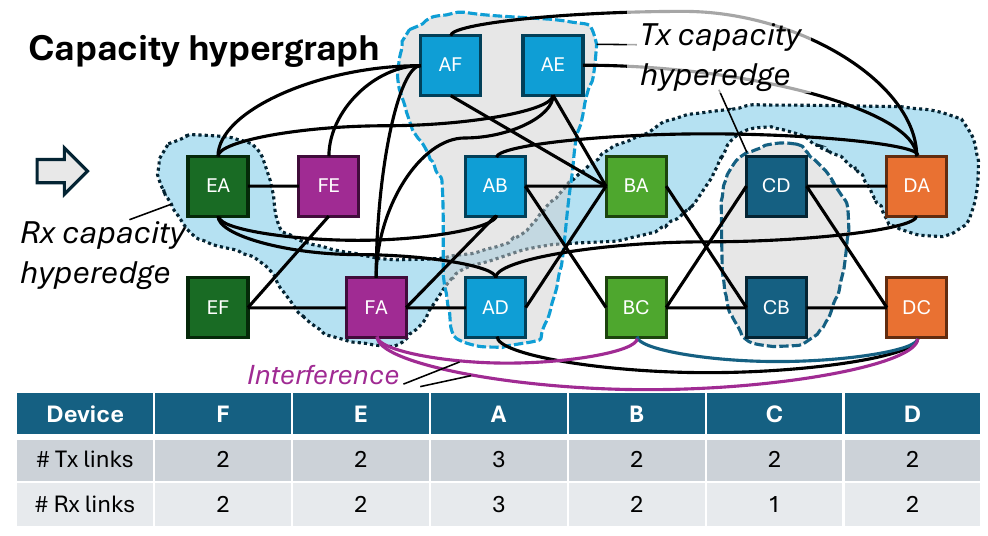}
    \label{fig:graph:global}
	  \vspace{-0.1in}
    }\hspace{-2mm}
    \subfloat[]{
    \includegraphics[height=1.65in]{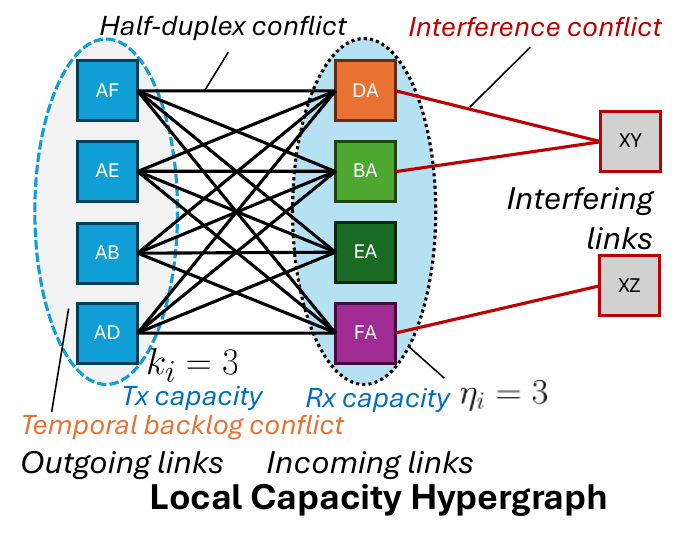}
    \label{fig:graph:local}
	  \vspace{-0.1in}
    }
	  \vspace{-0.15in}
	\caption{
		(a) A MIMO ad-hoc network, (b) its capacity hypergraph after internal contention and before OTA contention, along with the maximal numbers of outgoing and incoming links of each device listed. (c) A local capacity hypergraph on device A in distributed scheduling.
	} 
	\label{fig:graph}    
    \Description{(a) A MIMO ad-hoc network, (b) its capacity hypergraph after internal contention and before OTA contention, along with the maximal numbers of outgoing and incoming links, $\eta^{tx}_i$, $\eta^{rx}_i$, of each device listed. (c) A local capacity hypergraph on device A in distributed scheduling.}
\end{figure*}

In Figure.~\ref{fig:linksharing}, we present a minimal example of three devices to illustrate the difference in the initial assignment in the classic SP-BP operations in \eqref{E:commodity}-\eqref{E:quota} and our MaxU SP-BP in~\eqref{E:mu:select}--\eqref{E:sch-mimo}.
Notice that in both classic and MaxU SP-BP, the same packets maybe initially assigned to multiple links, which allows the same packets to choose different path.
However, this operation implicitly relies on the conflict graph for SISO networks as defined in Section~\ref{sec:sys}, i.e., only if all the outgoing links of a device are in conflict with each other, the constraint in~\eqref{E:formulation:comm} would be observed.
However, in MIMO networks, this could lead to the same packets being scheduled on multiple outgoing links of a device, causing detouring and looping. 
This problem is addressed in Section~\ref{sec:schedule} by  iteratively updating the preliminary rate assignment and link utilities during scheduling.

\subsection{Rule-based Initial Rate Assignment}
\label{sec:rule}
Besides MaxU, other rule-based approaches for multi-commodity selection can be developed to allocate the residual link rate after optimal commodity selection in classic SP-BP in Section~\ref{sec:sys:sp-bp}, e.g., based on sojourn time~\cite{ji2012delay,hai2018delay} and/or traffic priority.

\section{Distributed Scheduling for MIMO}\label{sec:schedule}
In a time-slotted MAC system, MIMO in networking can appear in multiple forms.
A multi-antenna transceiver can support concurrent outgoing or incoming links through spatial-division multiple access (SDMA), nullifying interference in certain directions with beamforming.
Multi-modal communications~\cite{kong2024covert} further introduce parallel links between a pair of transmitter and receiver.
Even a single antenna device can sequentially transmit small payloads to multiple receivers in a time slot, allowing them to decode their own data, which is referred as time-division multiple access (TDMA) -- although it may support only one incoming link per time slot, due to the challenge in coordinating multiple transmitters.

To extend MaxWeight scheduling to MIMO networks, we generalize the conflict graph in Section~\ref{sec:sys} into an attributed capacity hypergraph (ACH), and modify the distributed LGS scheduler~\cite{joo2012local} accordingly, as described in Algorithm~\ref{algo:lgs}. 

\subsection{Attributed Capacity Hypergraph}\label{sec:schedule:graph}
Consider each device $i\in\ccalV$ as a half-duplex transceiver with $\eta_i$ antennas, the schedule $\bbx(t)$ is subject to the following constraints
\begin{subequations}\label{E:io}
\begin{align}
 \bbx_{ia}(t) +  \bbx_{bi}(t)\leq 1,\;&\forall\; a,b\in \ccalN_{\ccalG^n}(i)\;.\label{E:io:half}    \\
 \sum_{j\in\ccalN_{\ccalG^n}(i)}\!\!\!\bbx_{ij}(t)\tau_{ij}(t)\leq \eta^{tx}_i  \;,&\sum_{j\in\ccalN_{\ccalG^n}(i)}\!\!\!\bbx_{ji}(t)\sigma_{ji}(t)\leq \eta^{rx}_i \;, \label{E:io:in}
\end{align}
\end{subequations}
In~\eqref{E:io}, $\tau_{ij}(t),\sigma_{ij}(t)$ are the costs of transmission and reception for link $(i,j)$, $\eta^{tx}_i,\eta^{rx}_i$ are the maximum number of transmission and reception streams.
The ACH, denoted by $\ccalG^c=(\ccalE, \ccalH)$, generalizes the definition of edge set $\ccalH$ in the conflict graph and introduces hyperedges with a capacity attribute.
A vertex $e\in\ccalE$ represents a directed link, and an attributed capacity hyperedge denoted as $h=(\eta^h: \{e_{1},\dots,e_{K}\})\in\ccalH$, which captures that the total cost of scheduled links in $h$ is limited to its capacity of $\eta^h$ per time slot.
For a regular pairwise edge, the capacity $\eta^h=1$.

\subsubsection{Interface Capacity}\label{sec:schedule:nodes}
We consider a transceiver $i\in\ccalV$ with $\eta_i\geq2$ antennas can support $\eta^{tx}_i=\eta^{rx}_i=\eta_i$ through SDMA, with $\tau_{ij}(t)=\sigma_{ij}(t)=1$.
A transceiver $i$ with $\eta_i=1$ can transmit on multiple outgoing links per time slot via TDMA, with $\tau_{ij}(t)\leq 1$ as air time, but receive on only one incoming link with $\sigma_{ij}(t)=1$.
For multi-modality links, we model each directed modality-link as a vertex in $\ccalG^c$, with interface capacity defined similarly as SDMA.
Therefore, for unit costs in~\eqref{E:io}, $\sigma_{ij}(t)=1$ for all $(i,j)\in\ccalE$, and   
\begin{equation}\label{E:cost}
    \tau_{ij}(t)\approx\begin{cases}
        \grave{\bbr}^{-1}_{ij}(t)\sum_{c\in\ccalC} \gamma_{ij}^{(c)}(t), & \eta_i=1\\
        1, &  \eta_i > 1
    \end{cases}\;,
\end{equation}

\subsubsection{Hyperedges}
There are three types of permanent conflict, 
a) half-duplex interface conflict in~\eqref{E:io:half};
b) interference conflict, e.g., two nearby links $e_1,e_2$ will interfere each other if activated simultaneously.
Both a) and b) can be represented by a regular pairwise edge, e.g., $(1:\{e_1,e_2\})$; and
c) transmission and reception capacity constraints on each device $i\in\ccalV$ in~\eqref{E:io:in}, respectively represented by attributed capacity hyperedges
$$ 
h^{tx}_i=\Big(\eta^{tx}_i: \{(i,j)\}_{j\in\ccalN_{\ccalG^n}(i)} \Big),\; 
h^{rx}_i=\Big(\eta^{rx}_i: \{(j,i)\}_{j\in\ccalN_{\ccalG^n}(i)} \Big)\;.
$$
In Figs.~\ref{fig:graph}, an exemplary MIMO ad-hoc network and its capacity hypergraph are illustrated, where the incoming links of single-antenna device $C$ conflict with each other since $\eta_C=1$.
Note that the interference can be asymmetric, e.g., $(F,A)$ interferes  incoming links of $C$, but $A$ can nullify the interference from $C$.

In addition to the permanent conflict relationships defined in~\eqref{E:io}, there is another temporary conflict relationship due to assigning the same packets to multiple outgoing links in both the exclusive SP-BP (\eqref{E:commodity}-\eqref{E:weight}) and MaxU SP-BP (\eqref{E:sort}-\eqref{E:allocate}).
Notice that such redundant initial assignment is intentional as it offers packets multiple options in moving forward.
However, this also means that we cannot directly activate multiple outgoing links with such redundant assignment, as it may violate constraint~\eqref{E:formulation:comm}.
To resolve this temporary backlog conflict, the scheduler must iteratively remove the scheduled backlogs from undecided links and recalculate the preliminary assignment $\gamma_{ij}^{(c)}(t)$ and link utility as the scheduling progresses.

If we leave the backlog conflict unresolved in MIMO networks, the same packets would likely be scheduled on multiple outgoing links of their host devices, allowing them to randomly choose suboptimal routes.
This not only wastes link bandwidth, but also causes looping, leading to significant increases in delay and a worse performance than that in SISO networks.

\begin{algorithm}[t]
\caption{Local Greedy Solver for ACH at Link-level (LGS-ACH)}
\label{algo:lgs-ach}
\begin{flushleft}
\hspace*{\algorithmicindent} 
\textbf{Input}: $\bbw,\grave{\bbr},\bbGamma, \bbQ, \bbU, \ccalG^n=(\ccalV,\ccalE),\ccalG^c=(\ccalE,\ccalH), K $ \\
\hspace*{\algorithmicindent} 
\textbf{Output}: $\bbM\in\reals^{|\ccalE|\times|\ccalC|}$, $\bbx\!\in\!\{0,1\}^{|\ccalE|}$ \COMMENTS{0: mute, 1: activate} 
\end{flushleft}
\begin{algorithmic}[1] 
\STATE $k=0, \bbx=-\boldsymbol{1}$, $\tilde{\bbQ}=\bbQ$, $\xi=\ccalE$, $\bbM = \big[\mu_{e}^{(c)} \mid e\in\ccalE, c\in\ccalC \big]$ 
\WHILE{ $\xi\neq\emptyset$ and $k\leq K$}
\STATE $k\leftarrow k +1$; update $\bbGamma,\bbw,\bbtau$ based on \eqref{E:gamma}, \eqref{E:cost}
\FORALL{ $e=(i,j)\in\ccalE$ and $\bbx_e = -1$ } 
\STATE \textbf{if} $ \tau_{e} > \eta^{tx}_i $ or $ \eta^{rx}_j=0 $ \textbf{then} $\bbx_e=0$, $\mu^{(c)}_{e}=0,\forall c\in\ccalC$;
\ENDFOR
\STATE $ \xi = \{e\mid e\in\ccalE, \bbx_e=-1\} $ \COMMENTS{set of undecided links}
\FORALL{ $e=(i,j)\in\xi $ } 
\IF{ $ e\in \textbf{top}_{\downarrow}(h^{tx}_i\cap\xi, w; 1) $ \COMMENTS{wins at Tx} }
\IF{ $ e\in \textbf{top}_{\downarrow}(\ccalN^1_{\ccalG^c}(e)\cap\xi, w; 1) $}
\STATE \COMMENTS{wins over half-duplex and interference edges}
\IF{ $ e\in \textbf{top}_{\downarrow}(h^{rx}_i\cap\xi, w; \eta^{rx}_i) $ }
\STATE \COMMENTS{wins over reception capacity hyperedge}
\STATE $\bbx_e=1$; $ \mu^{(c)}_{e}\!=\gamma^{(c)}_{e}\!,\; \tilde{Q}^{(c)}_{i}\!\!\leftarrow\! \tilde{Q}^{(c)}_{i}\!\!-\!\mu^{(c)}_{e}, \forall c\in\ccalC$ 
\STATE $\eta^{rx}_i\leftarrow\eta^{rx}_i-1,\; \eta^{tx}_i\leftarrow\eta^{tx}_i-\tau_{e}$ \COMMENTS{reduce cap.}
\ENDIF
\ENDIF
\ENDIF
\ENDFOR
\ENDWHILE
\end{algorithmic}
\end{algorithm}

\subsection{Local Greedy Scheduler for MIMO networks}
We present LGS-ACH in Algorithm~\ref{algo:lgs-ach}, which  generalizes the distributed MaxWeight link scheduler, LGS~\cite{joo2012local}, to our ACH conflict model at link level.
In each iteration, a link $(i,j)$ is scheduled only if it won contentions over all the incidental edges, including capacity hyperedge $h^{rx}_i$ and temporary backlog conflict at its transmitter, regular edges for half-duplex and interference conflicts, and reception capacity hyperedge $h^{rx}_i$.
Here, contention is resolved via an operation $ \epsilon=\textbf{top}_{\downarrow}(\varepsilon,w;n) $, defined as selecting the top-$n$ elements in set $\varepsilon=\{e\}$ ranked by weights $\{w_{e}\}$, and $\epsilon=\varepsilon$ if $|\varepsilon|\leq n$.
When a link is scheduled, the capacity of its incidental edges are updated, which leads to muting its conflicting neighbors.
Ties in contentions can be broken by comparing the unique IDs of equally weighted links.
This process terminates after all links have been decided.

\begin{algorithm}[t]
\caption{Local Greedy Scheduler for MIMO (LGS-MIMO)}
\label{algo:lgs}
\begin{flushleft}
\hspace*{\algorithmicindent} 
\textbf{Input}: $\bbw,\grave{\bbr},\bbGamma, \bbQ, \bbU, \ccalG^n=(\ccalV,\ccalE), \{\eta^{tx}_i,\eta^{rx}_i\}_{i\in\ccalV}, K $ \\
\hspace*{\algorithmicindent} 
\textbf{Output}: $\bbM\in\reals^{|\ccalE|\times|\ccalC|}$, $\bbx\!\in\!\{0,1\}^{|\ccalE|}$ \COMMENTS{0: mute, 1: activate}
\end{flushleft}
\begin{algorithmic}[1] 
\STATE $k=0, \bbx=-\boldsymbol{1}$, $\tilde{\bbQ}=\bbQ$,   $\bbv^{tx}=\bbv^{rx}=\boldsymbol{1}\in\reals^{|\ccalV|}$
\STATE $ \xi_i=\delta^+_{\ccalG^n}(i)\cup\delta^-_{\ccalG^n}(i), \phi_i=\emptyset,\forall i\in\ccalV$
\WHILE{ $\bbv^{tx}+\bbv^{rx}\neq \boldsymbol{0}$ and $k\leq K$}
\STATE $ k\leftarrow k+1 $
\FORALL{ $i\in\ccalV$ and $\bbv^{tx}_i+\bbv^{rx}_i>0$ } 
\STATE \COMMENTS{Intra-transmitter contention}
\IF{ $ \delta^-_{\ccalG^n}(i)\cap \xi_i \neq \emptyset $ }
\FOR{ $ (i,j)\in\delta^-_{\ccalG^n}(i)\cap \xi_i $ }
\STATE Update $ \gamma_{ij}^{(c)}$,  $w_{ij}$, $\tau_{ij}$ based on \eqref{E:gamma}, \eqref{E:cost}
\STATE \textbf{if} $\tau_{ij} > \eta^{tx}_i$ \textbf{then} $ \bbx_{ij}=0 $, $ \xi_i \leftarrow \xi_i\setminus\{(i,j)\} $
\ENDFOR
\STATE $e_i=\textbf{top}_{\downarrow}\big(\delta^-_{\ccalN^n}(i)\cap\xi_i, w; 1\big) $ \COMMENTS{Select outgoing link}
\STATE Transmit a RTS $\{e_i:w_{e_i}\}$ to its intended receiver
\ELSE 
\STATE $\bbv^{tx}_i=0$
\ENDIF
\STATE \COMMENTS{Exchange of RTS messages}
\STATE Receive RTS from all nearby devices, denoted by $\hat{\ccalN}_{\ccalH}(i)$ 
\STATE $ \ccalG^c_i, \bbomega^i = \varrho\big( \delta^+_{\ccalG^n}(i),\{e_j:w_{e_j}\}_{j\in\hat{\ccalN}_{\ccalH}(i)\cup\{i\}}\big) $ with Algo.~\ref{algo:lch}
\STATE $\theta_i=\emptyset$; {find greedy MWIS $\varepsilon_i=\varphi(\ccalG^c_i,\bbomega^i)$} with Algo.~\ref{algo:greedy}
\IF{$\eta^{rx}_i>0$ and $ e_i\notin \varepsilon_i $}
\STATE $\theta_i \!=\! \textbf{top}_{\downarrow}(\varepsilon_i, w; \eta^{rx}_i) $,\; $\phi_i \!\leftarrow\!\phi_i\cup\ccalN^1_{\ccalG^c_i}(\theta_i) $, 
\STATE $\eta^{rx}_i\!\leftarrow\!\eta^{rx}_i\!-\!|\theta_i|$,\; $\xi_i\leftarrow\xi_i \setminus \theta_i $
\STATE \textbf{if} $\bbv^{tx}_i=1$ \textbf{then} $\bbx_{e_i}=0$ \COMMENTS{mute $e_i$}
\ENDIF
\IF{ $\eta^{rx}_i=0$ or $\delta^+_{\ccalG^n}(i) \cap\xi_i=\emptyset $ }
\STATE $\bbv^{rx}_i=0$, $\phi_i\leftarrow\phi_i\cup (\delta^+_{\ccalG^n}(i) \cap\xi_i)$ 
\ENDIF
\STATE Broadcast a CTS $(\theta_i,\phi_i)$ to nearby devices $j\in\hat{\ccalN}_{\ccalH}(i)$
\STATE \COMMENTS{Exchange of CTS messages}
\IF{ $\bbv^{tx}_i=1$}
\STATE Receive CTS $(\theta_j,\phi_j), \forall j\in\hat{\ccalN}_{\ccalH}(i)$ from nearby devices
\STATE $ \Theta_i= \bigcup_{j\in\hat{\ccalN}_{\ccalH}(i)} \theta_j$,\; $ \Phi_i= \bigcup_{j\in\hat{\ccalN}_{\ccalH}(i)} \phi_j$
\IF{ $ e_i\in \Phi_i $ \COMMENTS{Rejected by some Rx}}
\STATE $\bbx_{e_i}=0 $;\; $ \mu^{(c)}_{e_i}=0,\forall c\in\ccalC $;\; $ \xi_i \leftarrow \xi_i\setminus\{e_i\} $
\ELSIF{ $ e_i\in \Theta_i $ \COMMENTS{Granted by intended Rx}}
\STATE $\bbx_{e_i}=1 $;\; $\mu^{(c)}_{e_i}=\gamma^{(c)}_{e_i},\;\tilde{Q}^{(c)}_{i}\leftarrow \tilde{Q}^{(c)}_{i}-\mu^{(c)}_{e_i},\forall c\in\ccalC$
\STATE $\eta^{tx}_i\leftarrow\eta^{tx}_i-\tau_{e_i}$,\; $ \phi_i\leftarrow\phi_i\cup \delta^+_{\ccalG^n}(i) $
\ENDIF
\STATE $\phi_i\leftarrow \phi_i \cup\{(j',i)\mid (j',i)\in \Phi_i\}$  \COMMENTS{Avoid hidden node problem by including rejected incoming links to $\phi_i$}
\ENDIF
\ENDFOR
\ENDWHILE
\end{algorithmic}
\end{algorithm}

To remove the temporary backlog conflicts between scheduled and unscheduled links, we update the residual backlog state $\tilde{\bbQ}$ in line 14 in Algo.~\ref{algo:lgs-ach}, and update the utilities and transmission costs of undecided links accordingly in each iteration (line 3 in Algo.~\ref{algo:lgs-ach}):
\begin{equation}\label{E:gamma}
    \gamma^{(c)}_{ij}\leftarrow\min\left\{\tilde{Q}^{(c)}_{ij}, \gamma^{(c)}_{ij}\right\}\;,\;
    w_{ij} =\sum_{c\in\ccalC} \gamma_{ij}^{(c)}\cdot U_{ij}^{(c)}(t).
\end{equation}
Here, in order to differentiate the commodity selection operations and for clarity, we drop $(t)$ for operations inside the scheduler.
Following~\eqref{E:gamma}, the unit cost $\tau_{ij}$ is updated with~\eqref{E:cost}.

Our link-level LGS-ACH can be translated into a protocol-style transceiver-level operations of distributed link scheduling for MIMO networks, denoted as LGS-MIMO and detailed in Algo.~\ref{algo:lgs}.
In an iteration, each transceiver $i\in\ccalV$ first selects an outgoing link $e_i$ based on cost and utility, then transmits a request-to-send (RTS) message to its intended receiver enclosing the identity and utility of $e_i$.
Next, device $i\in\ccalV$ receives RTS messages from its nearby devices, denoted by set $ \hat{\ccalN}_{\ccalH}(i) $.
With Algorithm~\ref{algo:lch} (Appendix \ref{app:algos}), these RTS messages are used to build a local conflict graph, $\ccalG^c_i=(\ccalE_i,\ccalH_i)$, which is a subgraph of the local capacity hypergraph  exemplified in Fig.~\ref{fig:graph:local}.
Then, with greedy MWIS heuristic in Algorithm~\ref{algo:greedy} ( Appendix \ref{app:algos}), each transceiver $i$ finds a greedy MWIS $\varepsilon_i$ on the $\ccalG^c_i$  (line 20 in Algo.~\ref{algo:lgs}), and selects a set of incoming links within its residual reception capacity as $\theta_i \!=\! \textbf{top}_{\downarrow}(\varepsilon_i, w; \eta^{rx}_i) $, and adds their local conflicting neighbors, denoted by $\ccalN_{\ccalG^c_i}(\theta_i)$, to its rejection list $\phi_i$.
By exchanging a clear-to-send (CTS) message enclosing $(\theta_i,\phi_i)$ with nearby devices, device $i$ learns if its transmission request on outgoing link $e_i$ is rejected or accepted.
If $e_i$ is accepted by its intended receiver, device $i$ marks $e_i$ as scheduled and adds its incoming links to $ \phi_i $ to inform their transmitters in the next round.
If $e_i$ is rejected by the intended or any unintended receivers, it is muted by device $i$.
Otherwise, $e_i$ enters the next iteration.
To avoid the hidden node problem, transceiver $i$ also add its incoming links that are rejected by other interfered devices to $\phi_i$ to inform their transmitters.
To limit the size of $\phi_i$, a rejected link can be removed from $\phi_i$ once $i$ receives an acknowledgment from its transmitter.
If link $e_i$ is scheduled, its transmitter $i$ updates its residual backlog state $\tilde{Q}^{(c)}_{i}$ for all $c\in\ccalC$, and the link utility and cost of any undecided outgoing links (line 9). 

LGS-MIMO is fully distributed without requiring the global knowledge of $\ccalG^c$ on any devices.
By limiting the maximum size of active ACH in LGS-ACH from $|\ccalE|$ to $|\ccalV|$ in each iteration, it reduces local exchange complexity from $\ccalO(\log|\ccalE|)$~\cite{ji2012delay} to $\ccalO(\log|\ccalV|)$.

\section{Dominance over Classic SP-BP}\label{sec:proof}
We can prove that MaxU SP-BP dominates classic SP-BP.
\begin{theorem}\label{th:dominance}
With everything else being equal, MaxU SP-BP is dominant over the classic SP-BP in the network capacity region.
\end{theorem}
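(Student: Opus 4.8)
The plan is to reduce the capacity-region comparison to a per-slot comparison of the Lyapunov drift bound, and to show that this bound is never larger under MaxU than under classic SP-BP. Both schemes are heuristics for the same drift-minimization objective \eqref{E:formulation:obj}, i.e., maximizing the controllable term $W(t):=\sum_{i,j,c}\mu_{ij}^{(c)}(t)\,U_{ij}^{(c)}(t)$; the only structural difference is that classic SP-BP additionally imposes the exclusivity constraint \eqref{E:optcmdy}, whereas MaxU keeps only the looser constraints \eqref{E:formulation:link}--\eqref{E:formulation:comm}. I would first make this precise by noting that, since $\gamma_{ij}^{(c)}(t)>0$ only when $U_{ij}^{(c)}(t)>0$, the MaxWeight objective \eqref{E:scheduling:obj} satisfies $W(t)=\bbx(t)^\top\bbw(t)$ exactly, so controlling $W(t)$ is the same as maximizing the scheduled link utility.

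The first key step is a pointwise link-utility lemma: for every time slot $t$ and every link $(i,j)\in\ccalE$, fed identical inputs $\bbQ(t),\bbU(t),\grave{\bbr}(t)$, the MaxU utility obeys $w_{ij}^{\mathrm{MaxU}}(t)\ge w_{ij}^{\mathrm{classic}}(t)$. I would prove this by inspecting the initial assignments. The sorted greedy fill of \eqref{E:sort}--\eqref{E:allocate} allocates the residual link rate first to the highest-backpressure commodity with a nonempty queue, yielding a term at least as large as the single term $\min\{\grave{\bbr}_{ij}(t),Q_i^{(c_{ij}^*)}(t)\}\,U_{ij}^{(c_{ij}^*)}(t)$ retained by classic \eqref{E:gamma:original}, and then adds only further nonnegative contributions $\gamma_{ij}^{(\tilde{\bbc}_n)}(t)\,U_{ij}^{(\tilde{\bbc}_n)}(t)\ge 0$, since residual rate is assigned exclusively to commodities with positive backpressure. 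The only case needing care is when $c_{ij}^*$ has an empty queue, in which classic wastes the link ($w_{ij}^{\mathrm{classic}}=0$) while MaxU serves the next eligible commodity, so the inequality still holds.

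The second step lifts this to the scheduled weight. Holding the conflict structure $\ccalG^c$ fixed (the ``everything else equal'' hypothesis), both schemes solve the same MaxWeight program \eqref{E:scheduling} over the identical family of feasible independent sets. Writing $\bbx^{\mathrm{MaxU}}$ for the MaxU optimizer, optimality together with componentwise utility dominance gives
\begin{equation*}
W^{\mathrm{MaxU}}(t)=\bbx^{\mathrm{MaxU}\top}\bbw^{\mathrm{MaxU}}\ge \bbx^{\mathrm{classic}\,\top}\bbw^{\mathrm{MaxU}}\ge \bbx^{\mathrm{classic}\,\top}\bbw^{\mathrm{classic}}=W^{\mathrm{classic}}(t),
\end{equation*}
where the first inequality holds because $\bbx^{\mathrm{MaxU}}$ maximizes over the common feasible set and the second because $\bbw^{\mathrm{MaxU}}\ge\bbw^{\mathrm{classic}}$ with $\bbx^{\mathrm{classic}}\ge\bb0$. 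I would then substitute this into the drift upper bound of \cite[Theorem 1]{zhao2024tmlcn}, which, for a fixed backlog state and arrival vector, is an affine and decreasing function of $W(t)$ with every other (arrival and constant) term identical across the two schemes. Since $W^{\mathrm{MaxU}}(t)\ge W^{\mathrm{classic}}(t)$ at \emph{every} backlog state, the MaxU drift bound is uniformly no larger; because stability follows from a state-wise negative-drift criterion (independent of the realized trajectory), any arrival rate for which the classic bound certifies stability is also stabilized by MaxU, hence $\Lambda_{\mathrm{classic}}\subseteq\Lambda_{\mathrm{MaxU}}$.

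The main obstacle is the scheduling comparison in the second step: the clean inequality above relies on \emph{exact} MaxWeight, so one must either restrict the statement to this idealization—which is the level at which the network capacity region is defined—or argue that the distributed greedy solver preserves the ordering, noting that greedy MWIS is not monotone in its weight vector in general. A related subtlety is that classic SP-BP is not even well defined for MIMO, so ``everything else equal'' must be read as fixing the SISO conflict structure; I would then separately verify that the backlog-reassignment loop in Algorithm~\ref{algo:lgs-ach} only tightens the per-commodity constraint \eqref{E:formulation:comm} without decreasing the achieved weight relative to the classic single-commodity schedule, so that the ACH scheduler reduces to the same conflict-set comparison under the SISO hypothesis.
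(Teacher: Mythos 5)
Your proposal follows essentially the same route as the paper's proof: a pointwise link-utility dominance $w_{ij}^{\mathrm{MaxU}}(t)\ge w_{ij}^{\mathrm{classic}}(t)$ (the paper's case analysis over the optimal and non-optimal commodities), lifted to the scheduled total weight under a common conflict structure, and then converted into capacity-region dominance by invoking the Lyapunov drift bound of \cite[Theorem 1]{zhao2024tmlcn}. Your handling of the scheduling step is in fact slightly more rigorous than the paper's --- the sandwich inequality $\bbx^{\mathrm{MaxU}\top}\bbw^{\mathrm{MaxU}}\ge\bbx^{\mathrm{classic}\top}\bbw^{\mathrm{MaxU}}\ge\bbx^{\mathrm{classic}\top}\bbw^{\mathrm{classic}}$ under exact MaxWeight, and the explicit caveat that greedy distributed solvers need not preserve the ordering, make precise what the paper only asserts in expectation --- but these are refinements of the same argument, not a different one.
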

\begin{proof}
We mark the link-commodity rate assignments of the classic SP-BP as $\hat{\bbM}(t)$ and that of MaxU SP-BP as $\tilde{\bbM}(t)$, and likewise for other intermediate variables.
Under heavily loaded traffic, e.g., $\grave{\bbr}_{ij}(t) \leq Q_{i}^{(c_{ij}^{*}(t))}\!(t)$,  steps 1-2 in~\eqref{E:mu:select}--\eqref{E:sch-mimo} are equivalent to those in~\eqref{E:commodity}-\eqref{E:weight}, i.e., $\hat{\gamma}^{(c)}_{ij}(t)=\tilde{\gamma}^{(c)}_{ij}(t),\forall\; (i,j)\in\ccalE,c\in\ccalC$, as illustrated in Fig.~\ref{fig:linksharing}.
In this case, MaxU and classic SP-BP are equivalent. 

Under lightly loaded traffic, $\grave{\bbr}_{ij}(t) > Q_{i}^{(c_{ij}^{*}(t))}\!(t)$, the preliminary rate assignment under MaxU SP-BP always includes that of the optimal commodity under the classic SP-BP, given the same queueing state $\bbQ(t)$ and biases $\bbB$.
Without backlog conflicts, e.g., in SISO networks or MIMO networks with LGS-MIMO, we have 
$$
\tilde{\gamma}^{(c)}_{ij}(t)\begin{cases}
    = \hat{\gamma}^{(c)}_{ij}(t), & c=c^*_{ij}(t)\\ 
    \geq\hat{\gamma}^{(c)}_{ij}(t)=0, & c\neq c^*_{ij}(t), U^{(c)}_{ij}(t) >0\\ 
    = \hat{\gamma}^{(c)}_{ij}(t) =0, & c\neq c^*_{ij}(t), U^{(c)}_{ij}(t) \leq 0
\end{cases}\;,\;\forall (i,j)\in\ccalE,
$$ and based on~\eqref{E:weight}, $\tilde{w}_{ij}(t)\geq\hat{w}_{ij}(t)$, $\forall\; (i,j)\in\ccalE$.
Therefore, $$ \mathbb{E}\Bigg[\sum_{(i,j)\in\ccalE} \tilde{\bbx}_{ij}(t)\tilde{w}_{ij}(t)\Bigg]\geq\mathbb{E}\Bigg[\sum_{(i,j)\in\ccalE} \hat{\bbx}_{ij}(t)\hat{w}_{ij}(t)\Bigg]\;,$$
under the same $\ccalG^c$, since $w_{ij}(t)$ is generally independent from the topology of $\ccalG^c$.
With $\mu^{(c)}_{ij}(t)=\bbx_{ij}(t)\gamma^{(c)}_{ij}(t)$ and \eqref{E:weight}, the expectation of the objective function in~\eqref{E:formulation} under MaxU SP-BP is greater than or equal to that under the classic SP-BP:
\begin{equation*} 
    \mathbb{E}\Bigg[\sum_{(i,j)\in\ccalE}\sum_{c\in\ccalC} \tilde{\mu}_{ij}^{(c)}\!(t) U_{ij}^{(c)}\!(t)\Bigg] \!\geq\! \mathbb{E}\Bigg[\sum_{(i,j)\in\ccalE}\sum_{c\in\ccalC} \hat{\mu}_{ij}^{(c)}\!(t) U_{ij}^{(c)}\!(t)\Bigg].
\end{equation*}
For classic SP-BP that directly applies steps 3-4 in \eqref{E:scheduling}-\eqref{E:quota} to MIMO networks without resolving the backlog conflicts, where $\check{\mu}^{(c)}_{ij}(t)$ stands for its actual rate assignment in the queueing state transition in~\eqref{E:queue} after one of the outgoing links scheduled for a packet is selected randomly, we have $ \hat{\mu}_{ij}^{(c_{ij}^{*}(t))}\!(t) \geq \check{\mu}^{(c_{ij}^{*}(t))}_{ij}(t) $, therefore 
$$ 
\mathbb{E}\Bigg[\sum_{(i,j)\in\ccalE}\sum_{c\in\ccalC} \hat{\mu}_{ij}^{(c)}\!(t) U_{ij}^{(c)}\!(t)\Bigg] \geq 
\mathbb{E}\Bigg[\sum_{(i,j)\in\ccalE}\sum_{c\in\ccalC} \check{\mu}_{ij}^{(c)}\!(t) U_{ij}^{(c)}\!(t)\Bigg]\;. 
$$

Based on the proof of maximum queueing stability (throughput optimality) for classic BP (and SP-BP) in~\cite{zhao2024tmlcn,neely2005dynamic}, a larger objective value in~\eqref{E:formulation} corresponds to a tighter upper bound of the Lyapunov drift of the backlog, in other words, a larger region of exogenous packet arrival rate with queueing stability.
Therefore, MaxU SP-BP dominates classic SP-BP in the network capacity region. 
\end{proof}

\section{Numerical experiments}
\label{sec:results}
We evaluate four SP-BP variations based on the combinations of two commodity selection approaches (\textit{Excl} for the classic exclusive selection and \textit{MaxU}) and two optimal shortest path (SP) bias schemes in~\cite{zhao2024tmlcn}:
1) SP-$\bar{r}$, in which edge weight in the connectivity graph $\ccalG^n$ is set to be the network-wide average long-term link rate $\bar{r}$; and
2) SP-$\bar{r}r_{max}/r$, in which the weight of a link $e$ in $\ccalG^n$ is set as $\bar{r}r_{max}/r_e$, where $r_{max}$ is the maximum long-term link rate across the network\footnote{Source code: \url{https://github.com/zhongyuanzhao/MaxU-SPBP}}.
The bias matrix $\bbB$ is found by all-pairs shortest path algorithm on edge-weighted connectivity graph~\cite{bernstein2019distributed}.

The simulation is based on the same set of network instances in \cite{zhao2024tmlcn}, generated by a 2D point process with different network sizes $|\ccalV|\in\{20,\dots,100\}$.
For each $|\ccalV|$, 100 test instances are generated by drawing 10 instances of random networks, each with 10 realizations of random flows and random link rates. 
A test instance contains $ 0.4|\ccalV|$ flows with distinct source-destination pairs,  uniformly distributed long-term link rates, $r_{e} \sim \mathbb{U}(10,42)$, and real-time link rates as $\grave{\bbr}_{e}(t) \sim  \mathbb{N}(r_{e}, 3)$, truncated to between $r_e\pm 9$, to capture fading channels with lognormal shadowing.
The simulation horizon is set as $T=1000$ time steps.
Unless otherwise stated, a flow can be configured at equal chances with either 1)~\textit{streaming} traffic, where new packet arrivals at the source follow a Poisson process with a uniformly random arrival rate $\lambda_c \sim \mathbb{U}(0.1,1.0)$ throughout the simulation horizon; or
2) \textit{bursty} traffic, new packet arrivals follow the same Poisson process but only lasting for $30$ time slots with a random start time $\mathbbm{U}(0,T-100)$.  
In this setting, the total number of packets of a bursty flow is far less than that of a streaming flow, presenting the most challenging scenarios for evaluating the ability of SP-BP in mitigating the last-packet problem.

Two types of networks are considered, 1) in SISO configuration, all mobile devices are equipped with omnidirectional antennas and transmit at identical power levels, 2) in MIMO configuration, the number of antennas of each device follows a distribution of $P(1)=0.2,P(2)=0.5,P(3)=0.2,P(4)=0.1$.

\begin{figure}[!t]
	\centering
	\includegraphics[width=\linewidth]{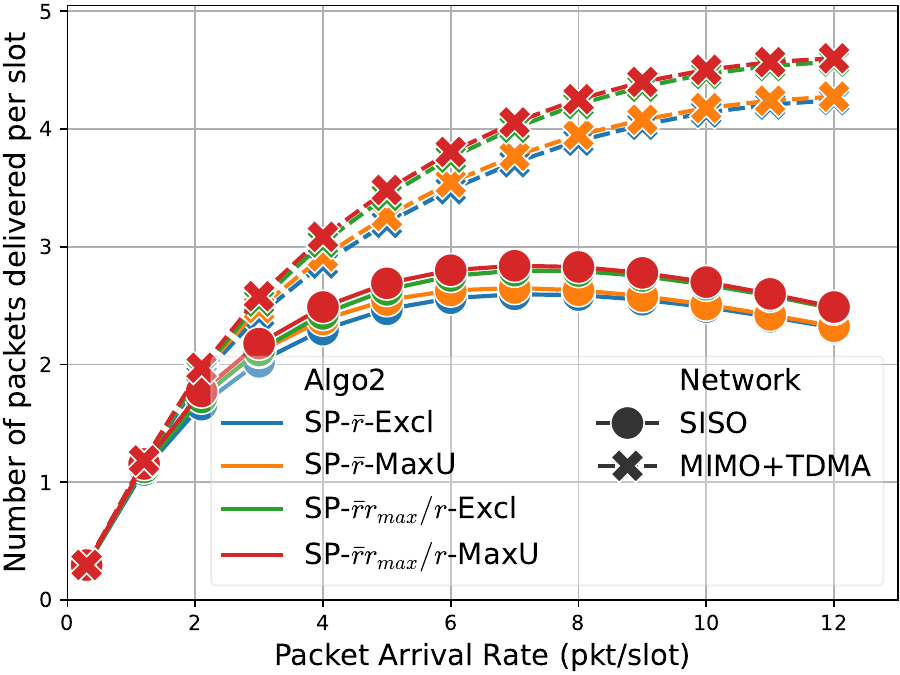}
	\caption{Average throughput per flow versus flow rate $\lambda$ in networks of 100 nodes. All flows are streaming at identical arrival rate.
 }
	\label{fig:thpt}
    \Description{Average flow throughput versus flow rates in networks of 100 nodes. All flows are streaming at identical arrival rate.}
\end{figure}

\subsection{Throughput Test}\label{sec:results:thpt}
We first test the throughput performance of the four SP-BP schemes in SISO and MIMO networks of a fixed size $|\ccalV|=100$ by setting all flows as streaming traffic with identical flow rate $\lambda$. 
In Fig.~\ref{fig:thpt}, the flow throughput (the average number of delivered packets per time slot) versus the flow rate $\lambda$ under these schemes are presented.
Our scheduling approach, LGS-MIMO, is demonstrated to be effective in MIMO networks as evident from the significantly increased throughput of all SP-BP schemes compared to that of SISO networks.
The SP-$\bar{r}r_{max}/r$ also increases the throughput over SP-$\bar{r}$ with more link features, consistent with the result in~\cite{zhao2024tmlcn}.
Most importantly, with everything else being equal, our MaxU always achieves slightly higher throughput than the exclusive commodity selection in the classic SP-BP.
This empirical result supports our theoretical proof in Section~\ref{sec:proof} that MaxU SP-BP dominates the classic SP-BP in the network capacity regime.
Notice that our classic SP-BP baselines with optimized biases already significantly outperform the original and many improved BP schemes in throughput as shown in~\cite{zhao2024tmlcn}.

\begin{figure}[!t]
	\centering
	\hspace{-3mm}
	\subfloat[]{
		\includegraphics[height=2.5in]{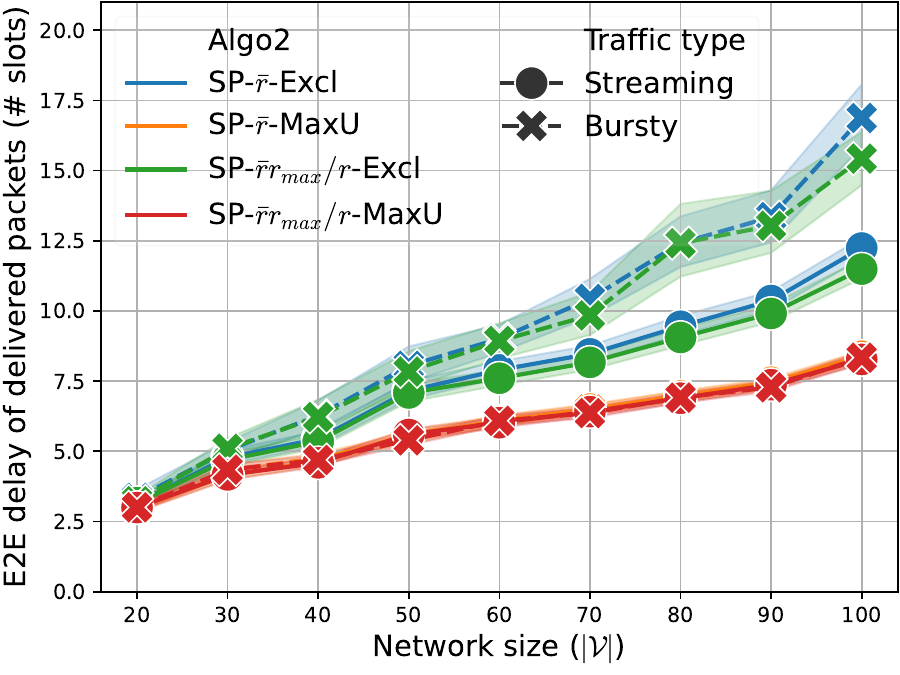}
		\label{fig:mimo:delay}\vspace{-0.1in}
	}\\
    \hspace{-3mm}
	\subfloat[]{
		\includegraphics[height=2.5in]{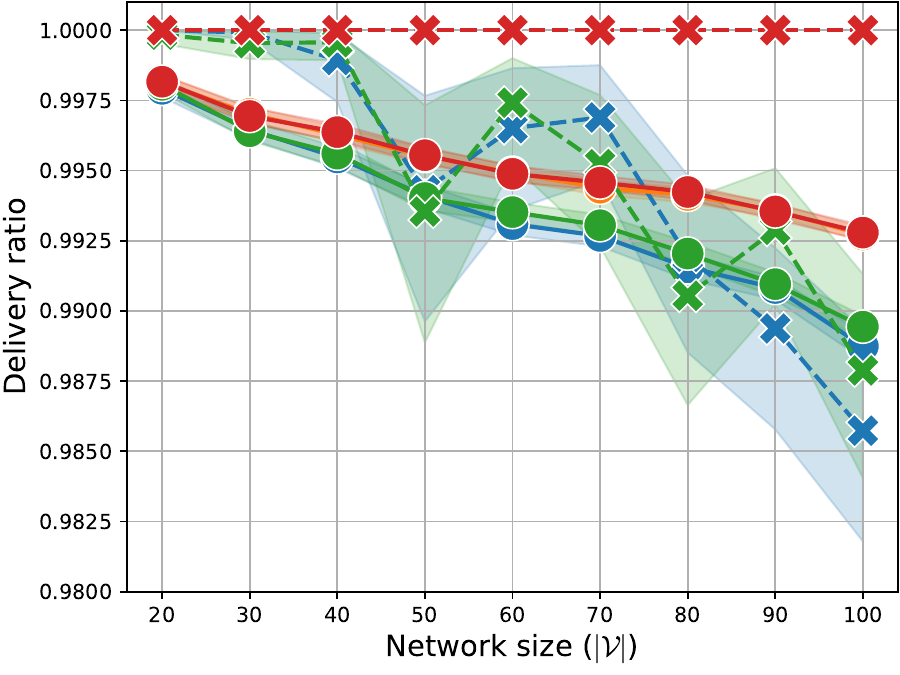}
		\label{fig:mimo:delivery}\vspace{-0.1in}
	}
	\caption{
        Results on networks of 20-110 nodes with MIMO and TDMA for $T=1000$:
        (a)~Average end-to-end delay of delivered packets,
        (b)~Average packet delivery ratio at the end of simulation.
        The bands indicate $95\%$ confidence interval. 
        There are $0.4|\ccalV|$ commodities, flow rate $\lambda_c\in\mathbbm{U}(0.1,1)$.
        At a probability of $0.5$, a flow can be configured with streaming traffic or bursty traffic lasting $30$ time slots with a random start time $\mathbbm{U}(0,T-100)$.
	}
	\label{fig:mimo}    
    \Description{test results curves}
\end{figure}

\subsection{Bursty vs Streaming in MIMO Networks}

Next, the four SP-BP schemes are evaluated in networks of different sizes under the mixed traffic setting as described in the beginning of this section.
The traffic is lightweight with flow rate $\lambda_c\in\mathbb{U}(0.1,1)$ compared to the maximum throughput $ > 2.5$ for SISO and $>4.2$ for MIMO, as shown in Fig.~\ref{fig:thpt}.
We track individual packets and collect three performance metrics for each flow: 
the average end-to-end latency (no. of slots) and trip length (no. of hops) of packets that have reached their destinations, and the delivery ratio between the numbers of delivered and injected packets.
Next, we gather the performance metrics of the average flows (mean metric across all flows) and $95$ percentile flows ($95$ percentile of all flows) for each test instance, and plot the mean metric of 100 instances of the same network size $|\ccalV|$ as a single point in Figs~\ref{fig:mimo},~\ref{fig:network}, and~\ref{fig:hops}.

In Figs.~\ref{fig:mimo}, we compare the latencies and delivery ratios of streaming and bursty traffic of average flows under four SP-BP schemes in MIMO networks.
With exclusive commodity selection, bursty traffic suffers up to $33-37\%$ higher latency on average (Fig.~\ref{fig:mimo:delay}).
In comparison, MaxU SP-BP achieves almost identical latency for streaming and bursty traffic,
which is also $32-37\%$ lower than those of the streaming traffic under classic SP-BP. 
By examining the average delivery ratios in Figs.~\ref{fig:mimo:delivery}, we find that MaxU delivers $100\%$ packets for bursty flows, and increases delivery ratio of streaming flows.
The delivery ratio of bursty traffic under classic SP-BP decreases as network grows larger.
It demonstrates that link-sharing MaxU is more effective in bandwidth utilization and addressing last-packet problem than exclusive commodity selection.

\begin{figure}[!t]
	\centering
    \hspace{-3mm}
	\subfloat[Average flows]{
		\includegraphics[height=2.5in]{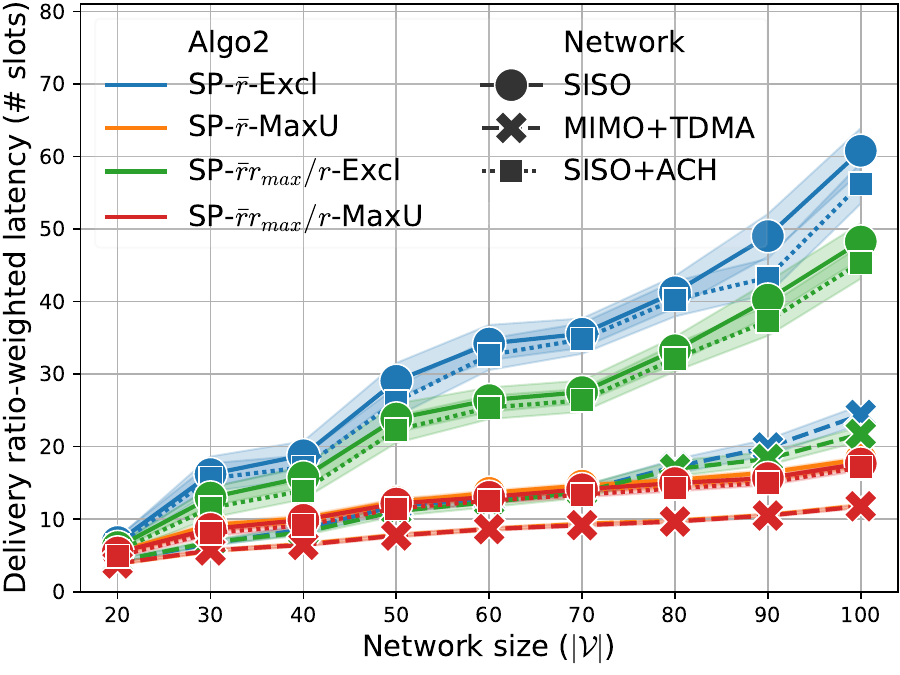}
		\label{fig:network:e2e}\vspace{-0.1in}
	}\\
    \hspace{-3mm}
	\subfloat[$95$ percentile flows]{
		\includegraphics[height=2.5in]{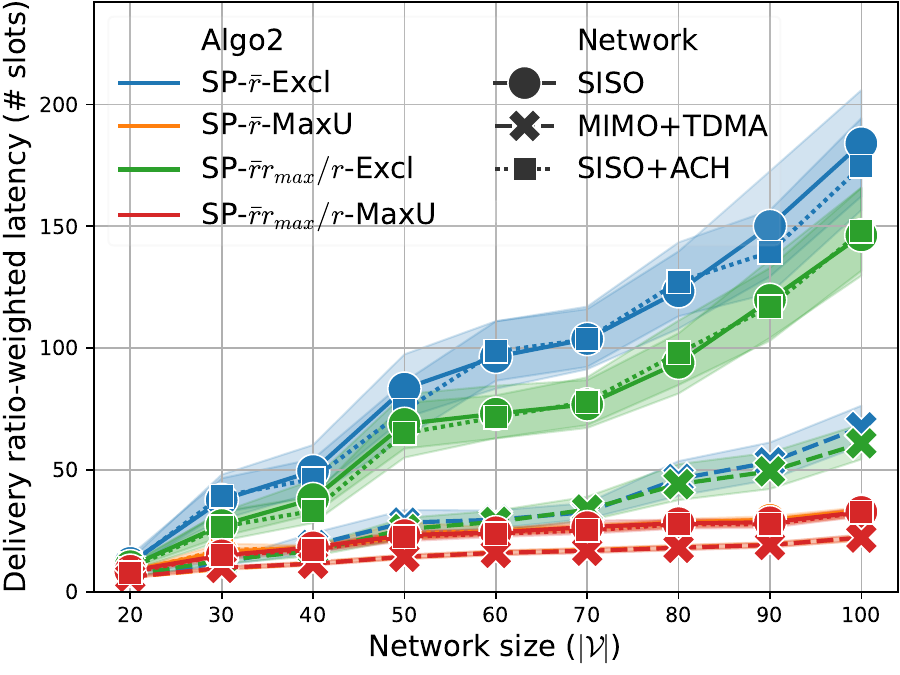}
		\label{fig:network:e2e95}\vspace{-0.1in}
	}
	\caption{
        Latency weighted by delivery ratio (Latency $\times$ delivery ratio  + $T$($1-$ delivery ratio)) in networks of 20-110 nodes for $T=1000$:
        (a)~Average flows, and
        (b)~$95$ percentile flows (y axis range tripled). 
        The bands indicate $95\%$ confidence interval. 
        Same test configuration as that in Figs.~\ref{fig:mimo}.
	}
	\label{fig:network}    
    \Description{test results curves}
\end{figure}

\subsection{Link Scheduling in SISO vs MIMO}
For behavioral insights of the commodity selection approaches and distributed schedulers in SISO and MIMO networks, we present the latency weighted by delivery ratio, and packet trip length for the average and $95$ percentile flows regardless of traffic types in Figs.~\ref{fig:network}.
Latency weighted by delivery ratio is introduced in~\cite{zhao2024tmlcn} to aggregate the latency measurement and delivery ratio by treating the latency of undelivered packets as $T=1000$, which enables a consistent comparison against other BP schemes evaluated in~\cite{zhao2024tmlcn}. 

For average flows (Fig.~\ref{fig:network:e2e}), MaxU can reduce the composite latency of exclusive commodity selection by up to $70\%$ in SISO networks, and $50\%$ in MIMO networks.
The benefit of MaxU is more significant in $95$ percentile flows, representing the worst cases, reducing the composite latency in SISO networks by up to $80\sim85\%$, and by $60\%$ in MIMO networks, as illustrated in Fig.~\ref{fig:network:e2e95}.

In addition, we plug our LGS-MIMO into SISO networks by setting antenna number as 1 for all devices, annotated as SISO+ACH, which performs similarly with the original LGS~\cite{joo2012local} with even slightly better latency and packet trip length. 
This shows that the original LGS operating on conflict graphs can be viewed as a special case of LGS-MIMO in SISO networks.

\begin{figure}[!t]
	\centering
    \hspace{-3mm}
	\subfloat[Average flows]{
		\includegraphics[height=2.5in]{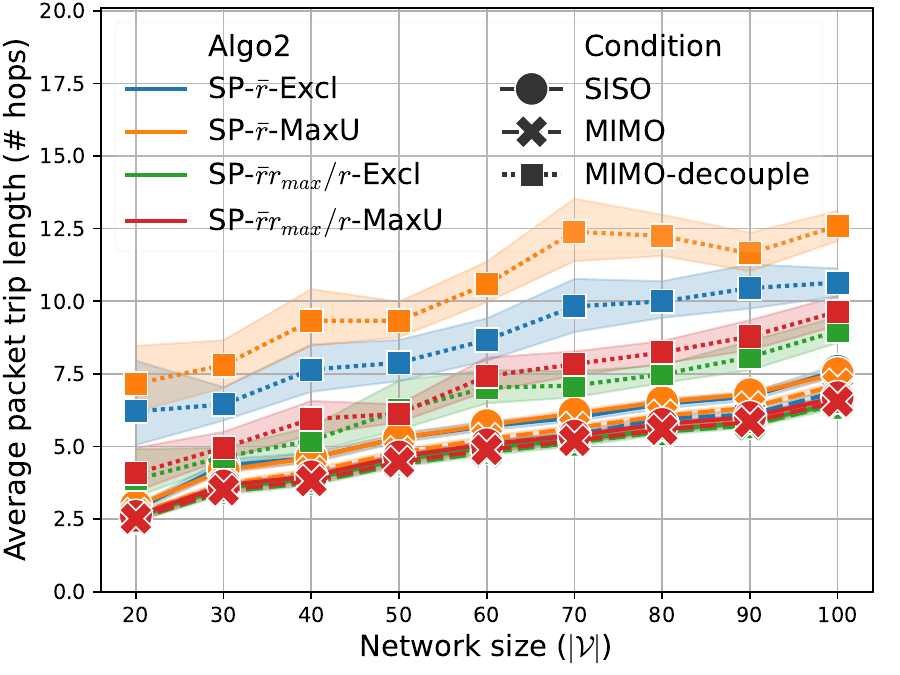}
		\label{fig:hops:mean}\vspace{-0.1in}
	}\\
    \hspace{-3mm}
	\subfloat[$95$ percentile flows]{
		\includegraphics[height=2.37in]{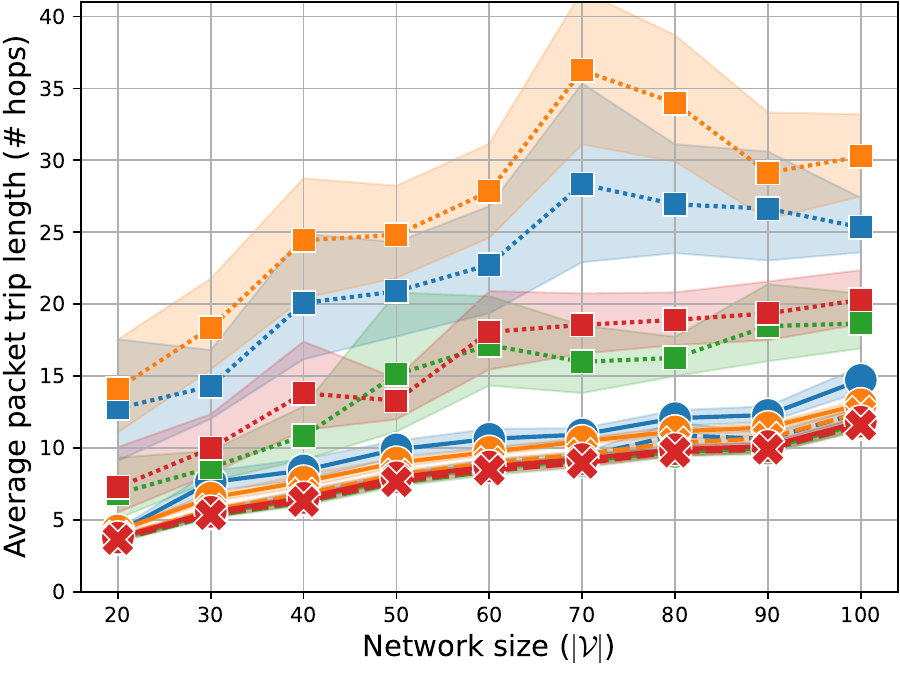}
		\label{fig:hops:q95}\vspace{-0.1in}
	}
	\caption{
        Average trip length of delivered packets in networks of 20-110 nodes for $T=1000$:
        (a)~Average flows.
        (b)~$95$ percentile flows (y axis range doubled).
        The values under MIMO-decouple decrease in $|\ccalV|\in\{80,90,100\}$ due to lower delivery ratio.
        The bands indicate $95\%$ confidence interval. 
        Same test configuration  as that in Figs.~\ref{fig:mimo}.
	}
	\label{fig:hops}    
    \Description{test results curves}
\end{figure}

To reveal the factors that influence the latency, we present the mean packet trip length for the average and $95$ percentile flows in Figs.~\ref{fig:hops:mean} and \ref{fig:hops:q95}, respectively. 
Notice the difference between trip length and path length, as the former accounts for the actual trips of individual packets including detours and loops.
The result shows that better bias SP-$\bar{r}r_{max}/r$ and/or MIMO networks can reduce packet traversal distance compared to the baselines of SP-$\bar{r}$ in SISO networks. 
Whereas the MaxU commodity selection results in similar or even slightly longer trips compared to classic exclusive approach. 
It indicates that the benefit of MaxU comes from efficient bandwidth utilization  rather than better routes.

We also include an ablation test in MIMO networks by decoupling the initial rate assignment and link scheduling, as illustrated by dotted lines with square markers (MIMO-decouple) in Figs.~\ref{fig:hops:mean} and \ref{fig:hops:q95}. 
It can be observed that without rate reassignment during scheduling, the trip length of both average and 95 percentile flows increase significantly in MIMO networks, with the  latter suffer most.
It shows that without LGS-MIMO, all SP-BP schemes suffer from detouring and looping, with link-sharing commodity selection making it worse, demonstrating the critical role of rate reassignment for scheduling in MIMO networks.

\section{Conclusions}
\label{sec:conclusions}

In this work, we address the new challenges of backpressure routing  since its invention brought by diverse traffic and advances in air-interfaces.
In particular, short-lived traffic, higher link bandwidth, and MIMO capabilities based on TDMA, SDMA, and multi-modality communications challenge the core operations in BP schemes.
Our solutions in commodity link sharing and distributed link scheduling renew the core operations of classic backpressure routing for modern wireless networks, based on the same theoretical foundation.
The benefits of our solutions in latency and throughput over classic approaches are supported by both theoretical analysis and experimental results. 
More importantly, our methods can be incorporated into many existing modifications of BP schemes for various scenarios, and could help popularize SP-BP routing in practice.
Future works include further understanding its impact on routing performance in various network connectivity and interference settings, enriching conflict modeling for multi-modal and full-duplex transceivers, and support for traffic of different priorities, and different queueing disciplines.

\begin{acks}
Research was sponsored by the DEVCOM ARL Army Research Office and was accomplished under Cooperative Agreement Number W911NF-24-2-0008. 
The views and conclusions contained in this document are those of the authors and should not be interpreted as representing the official policies, either expressed or implied, of the DEVCOM ARL Army Research Office or the U.S. Government. 
The U.S. Government is authorized to reproduce and distribute reprints for Government purposes notwithstanding any copyright notation herein.
\end{acks}

{
\bibliographystyle{ACM-Reference-Format}
\bibliography{strings,refs,refs_ol}
}

\appendix 
\section{Algorithms for local conflicts}\label{app:algos}

\begin{algorithm}[ht]
\caption{Construct Local Conflict Graph $ \varrho(\cdot) $}
\label{algo:lch}
\begin{flushleft}
\hspace*{\algorithmicindent} 
\textbf{Input}: $\delta^+_{\ccalG^n}(i), \{e_j:w_{e_j}\}_{j\in \hat{\ccalN}_{\ccalH}(i)\cup\{i\}} $ \\
\hspace*{\algorithmicindent} 
\textbf{Output}: $\ccalG^c_i=(\ccalE_i,\ccalH_i), \bbomega^i$     
\end{flushleft}
\begin{algorithmic}[1] 
\STATE $\ccalE_i= \{e_i\}$, $\omega^i_i=w_{e_i}$, $\ccalH_i=\emptyset $ 
\FOR{ $ j\in \hat{\ccalN}_{\ccalH}(i)  $}
\IF{ $ e_j \in \delta^+_{\ccalG^n}(i) $ or $j$ is a known interferer}
\STATE $ \ccalE_i\leftarrow\ccalE_i\cup\{e_j\} $, $\omega^i_{e_j}=w_{e_j}$
\STATE $ \ccalH_i\leftarrow\ccalH_i\cup( 1:\{e_i, e_j\} ) $ \COMMENTS{half-duplex or interference}
\ELSE 
\STATE Analyze signal properties (e.g., DoA, RSS) of RTS for $e_j$
\FOR{ $ e_{j'}\in\delta^+_{\ccalG^n}(i) $ and $e_{j'}$ would be interfered by $e_j$}
\STATE \textbf{if} $ e_j\notin\ccalE_i $ \textbf{then} $ \ccalE_i\leftarrow\ccalE_i\cup\{e_j\} $, $\omega^i_{e_j}=w_{e_j}$
\STATE $ \ccalH_i\leftarrow\ccalH_i\cup( 1:\{e_{j'}, e_j\} ) $ \COMMENTS{new interference}
\ENDFOR
\ENDIF
\ENDFOR
\end{algorithmic}
\end{algorithm}

\begin{algorithm}[ht]
\caption{Greedy MWIS on Local Conflict Graph $\varphi(\cdot) $}
\label{algo:greedy}
\begin{flushleft}
\hspace*{\algorithmicindent} 
\textbf{Input}: $\ccalG^c_i=(\ccalE_i,\ccalH_i), \bbomega^i $ \\
\hspace*{\algorithmicindent} 
\textbf{Output}: independent set $\varepsilon_i$     
\end{flushleft}
\begin{algorithmic}[1] 
\STATE $\varepsilon_i =\emptyset$; $m_e=-1,\forall e \in \ccalE_i$; $\tilde{\bbe}=\arg\textup{sort}_{\downarrow}\left[\omega^i_{e}\right]_{e\in\ccalE_i}  $
\FOR{ $n=1,\dots,|\ccalE_i|$ } 
\IF{ $m_{ \tilde{\bbe}_n}=-1$ and $\omega^i_{\tilde{\bbe}_n}>0$ }  
\STATE $\varepsilon_i \leftarrow\varepsilon_i \cup \{\tilde{\bbe}_n\} $ 
\STATE $ m_{e}=0 , \forall\; e\in \ccalN^1_{\ccalG^c_i}(\tilde{\bbe}_n)$ \COMMENTS{mute all regular neighbors}
\ENDIF
\ENDFOR
\end{algorithmic}
\end{algorithm}

\end{document}